\definecolor{darkgreen}{rgb}{0.0,0,0.9}
\newtheorem{theorem}{Theorem}[section]
\newtheorem{lemma}{Lemma}[section]
\newcommand{\lleft}{\texttt{left}}
\newcommand{\Btop}{B_\texttt{top}}
\newcommand{\Bleft}{B_\texttt{left}}
\newcommand{\Bright}{B_\texttt{right}}
\newcommand{\Bbottom}{B_\texttt{bottom}}
\newcommand{\ttop}{\texttt{top}}
\title{(Faster) Multi-Sided Boundary Labelling\thanks{This work is supported in part by Natural Sciences and Engineering Research Council of Canada (NSERC).}}
\author[1]{Prosenjit Bose}
\author[1]{Saeed Mehrabi}
\author[2]{Debajyoti Mondal}
\affil[1]{{\small School of Computer Science, Carleton University, Ottawa, Canada.

\texttt{jit@scs.carleton.ca, saeed.mehrabi@carleton.ca}}
}
\affil[1]{{\small Department of Computer Science, University of Saskatchewan, Saskatoon, Canada.

\texttt{d.mondal@usask.ca}}
}
\date{}
\begin{document}

\maketitle

\begin{abstract}
A \emph{1-bend boundary labelling} problem consists of an axis-aligned rectangle $B$,  $n$ points (called \emph{sites}) in the interior, and $n$ points (called \emph{ports}) on the labels along the boundary of $B$. The goal is to find a set of $n$ axis-aligned curves (called \emph{leaders}), each having at most one bend and connecting one site to one port, such that the leaders are pairwise disjoint. A 1-bend boundary labelling problem is $k$-sided ($1\leq k\leq 4$) if the ports appear on $k$ different sides of $B$. Kindermann et al.~[``Multi-Sided Boundary Labeling'', Algorithmica, 76(1): 225-258, 2016] showed that the 1-bend three-sided and four-sided boundary labelling problems can be solved in $O(n^4)$ and $O(n^9)$ time, respectively. Bose et al.~[SWAT, 12:1-12:14, 2018] improved the latter running time to $O(n^6)$ by reducing the problem to computing maximum independent set in an outerstring graph. In this paper, we improve both previous results by giving new algorithms with running times $O(n^3\log n)$ and $O(n^5)$ to solve the 1-bend three-sided and four-sided boundary labelling problems, respectively.
\end{abstract}

%%%%%%%%%%%% NEW SECTION %%%%%%%%%%%%%%%%%
\section{Introduction}
\label{sec:introduction}
Map labelling is a well-known problem in cartography with applications in educational diagrams, system manuals and  scientific visualization. Traditional map labelling that places the labels on the map such that each label is incident to its corresponding feature, creates overlap between labels if the features are densely located on the map. This motivated the use of \emph{leaders}~\cite{DBLP:journals/ior/Zoraster97,FreemanMC96}: line segments that connect features to their labels. As a formal investigation of this approach, Bekos et al.~\cite{DBLP:journals/comgeo/BekosKSW07} introduced \emph{boundary labelling}: all the labels are required to be placed on the boundary of the map and to be connected to their features using leaders. The point where  a leader touches the label is called a \emph{port}. Their work initiated a line of research in developing labelling algorithms with different labelling aesthetics, such as minimizing leader crossings, number of bends per leader and sum of leader lengths~\cite{DBLP:journals/jgaa/BekosCFH0NRS15,DBLP:journals/algorithmica/BekosKNS10,DBLP:journals/jgaa/BenkertHKN09,BoseCK0M18}.

In this paper, we study the \emph{$1$-bend $k$-sided boundary labelling} problem. The input is an axis-aligned rectangle $B$ and a set of $n$ points (called \emph{sites}) in the interior of $B$. In addition, the input contains a set of $n$ points on the boundary of $B$ representing the ports on $k$ consecutive sides of $B$ for some $1\leq k\leq 4$. The objective is to decide whether each site can be connected to a unique port using an axis-aligned leader with at most $1$ bend such that the leaders are disjoint and each leader lies entirely in the interior of $B$, except the endpoint that is attached to a port. Figure~\ref{fig:partitionedExample}(a) illustrates a labelling for a 1-bend $k$-sided boundary labelling instance. If such a solution exists, then we call it a \emph{feasible} solution and say that the problem is \emph{solvable}. Notice that not every instance of the boundary labelling problem is solvable; see Figure~\ref{fig:partitionedExample}(b).

\begin{figure}[t]
\centering
\includegraphics[width=\textwidth]{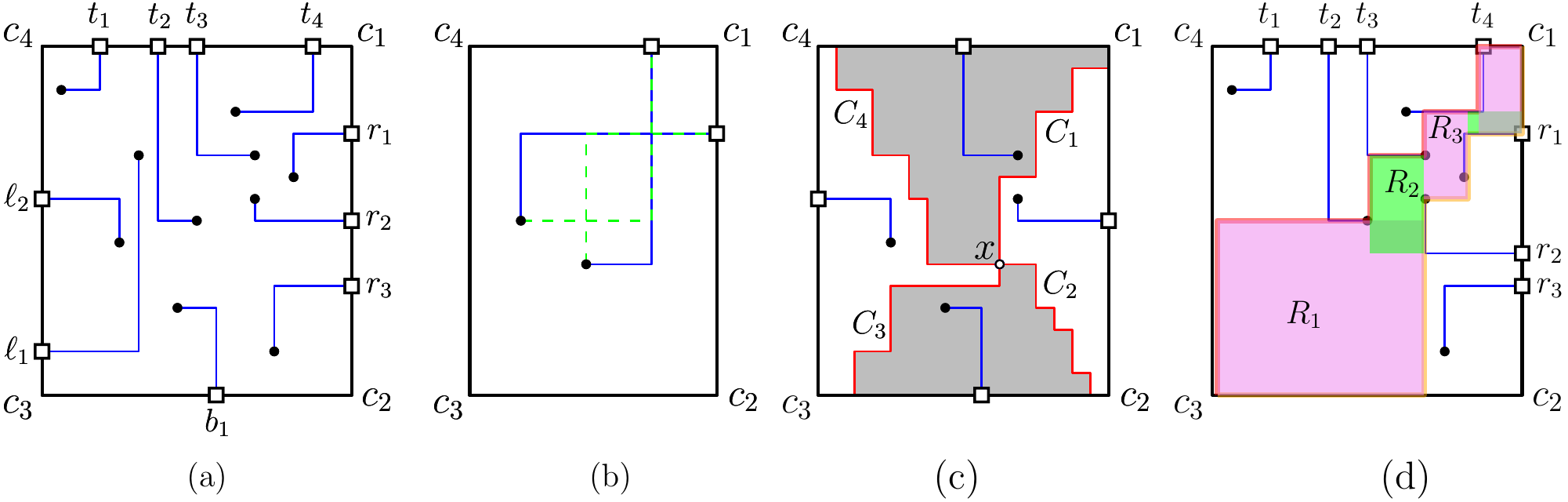}
\caption{(a) An 1-bend four-sided boundary labelling problem with $w=4, x=3,y=1$ and $z=2$, (b) an instance of the 1-bend two-sided boundary labelling problem with no planar solution with 1-bend leaders, and (c) a partitioned solution. (d) The sequence of empty rectangles.}
\label{fig:partitionedExample}
\end{figure}

\paragraph{Related work.} The boundary labelling problem was first formulated by Bekos et al.~\cite{DBLP:journals/comgeo/BekosKSW07} who solved the 1-bend one-sided and 1-bend two-sided models (when the ports lie on two opposite sides of $R$) in $O(n\log n)$ time. They also gave an $O(n\log n)$-time algorithm for the 2-bend four-sided boundary labelling problem (i.e., when each leader can have at most 2 bends). 

Kindermann et al.~\cite{DBLP:journals/algorithmica/KindermannNRS0W16} examined  $k$-sided boundary labelling, where the ports appear on adjacent sides. For the 1-bend two-sided boundary labelling problem, they gave an $O(n^2)$-time algorithm. For   1-bend three- and four-sided models of the problem, they gave  $O(n^4)$- and $O(n^9)$-time algorithms, respectively. If a boundary labelling instance admits an affirmative solution, then it is desirable to seek for a labelling that optimizes a labelling aesthetic, such as minimizing the sum of the leader lengths or minimizing the number of bends per leader. For minimizing the sum of leader lengths, Bekos et al.~\cite{DBLP:journals/comgeo/BekosKSW07} gave an exact $O(n^2)$-time algorithm for the 1-bend one-sided and 1-bend (opposite) two-sided models; their algorithm for 1-bend one-sided model was later improved to an $O(n\log n)$-time algorithm by Benkert et al.~\cite{DBLP:journals/jgaa/BenkertHKN09}.

Kindermann et al.~\cite{DBLP:journals/algorithmica/KindermannNRS0W16} gave an $O(n^8\log n)$-time dynamic programming algorithm   for the 1-bend (adjacent) two-sided model. Bose et al.~\cite{BoseCK0M18} (see~\cite{DBLP:journals/corr/abs-1803-10812} for the full version) improved this result by giving an $O(n^3\log n)$-time dynamic programming algorithm. They also showed that the 1-bend three- and four-sided problems (for the sum of the leader length minimization) can be reduced to the maximum independent set problem on outerstring graphs. The idea is to, for each site, obtain $n$ leaders by connecting the site to every port. Consider each leader as an outerstring (i.e., a polygonal curve that has one endpoint attached to the boundary of an enclosing rectangle).  This forms an outerstring graph. Then, by an appropriate weight assignment for edges, the boundary labelling problem is equivalent to the maximum-weight independent set problem on this outerstring graph. Since we have $O(n^2)$ outerstrings (where $n$ is the number of sites) and maximum-weight independent set can be solved in $O(N^3)$ time on outerstring graphs (here, $N$ is the complexity of geometrically representing the input graph)~\cite{KeilMPV17}, they obtain an $O(n^6)$-time algorithm for the three- and four-sided boundary labelling problem. However, it is not obvious whether this approach can be used to obtain a faster algorithm for the decision version, where we do not require leader length minimization.

Other models for boundary labelling such as labelling with sliding ports~\cite{DBLP:journals/jgaa/BenkertHKN09}, dynamic boundary labelling~\cite{DBLP:conf/gis/NollenburgPS10}, boundary labelling with octilinear leaders~\cite{DBLP:journals/algorithmica/BekosKNS10}, many-to-one boundary labelling~\cite{DBLP:journals/jgaa/LinKY08,DBLP:conf/apvis/Lin10} and boundary labelling in the presence of obstacles~\cite{DBLP:conf/cccg/0001S16} has also been studied. Moreover, see~\cite{DBLP:journals/jgaa/BenkertHKN09,BoseCK0M18} for results on bend minimization. Throughout the paper, we consider 1-bend leaders, which are also known as \emph{$po$-leaders}~\cite{DBLP:journals/comgeo/BekosKSW07}.

\paragraph{Our results.} In this paper, we give algorithms with running times $O(n^3\log n)$ and $O(n^5)$ for the 1-bend three-sided and four-sided boundary labelling problems, which improves the previously best known algorithms by nearly a linear factor.

The fastest known algorithm for the three-sided model was Kindermann et al.'s~\cite{DBLP:journals/algorithmica/KindermannNRS0W16} $O(n^4)$-time algorithm that reduced  the problem into $O(n^2)$ two-sided boundary labelling problems. While we also use their partitioning technique, our improvement comes from a dynamic programming approach that carefully decomposes three-sided problems into various simple shapes that are not necessarily rectangular. We prove that such a decomposition can be computed fast using suitable data-structures. The crux of our approach is to show new properties of simpler types of problems and then using them as  building blocks to solve the main labelling problem.

For the four-sided model, the fastest known algorithm was the $O(n^6)$-time algorithm of Bose et al.~\cite{BoseCK0M18} that reduced the problem into the maximum independent set problem in an outerstring graph. We show that Kindermann et  al.'s~\cite{DBLP:journals/algorithmica/KindermannNRS0W16} observation on partitioning  two-sided boundary labelling problems using a $xy$-monotone curve can be used to find a fast  solution for the four-sided model. Such an approach was previously taken by Bose et al.~\cite{BoseCK0M18}, but it already took $O(n^3\log n)$ time for the 2-sided  model, and they eventually settled with an $O(n^6)$-time algorithm for the four-sided model. With the four sides involved, designing a decomposition with a few different types of  shapes of low complexity becomes challenging. Our improvement results from a systematic decomposition  that generates a small number of subproblems at the expense of using a larger size dynamic programming table, resulting in an $O(n^5)$-time algorithm.

%%%%%%%%%%%%%%%%%%%%%%%%%%%%%%%%% NEW SECTION
\section{Preliminaries}
\label{sec:prelimins}
In this section, we give some notation and preliminaries that will be used in the rest of the paper. For a point $p$ in the plane, we denote the $x$- and $y$-coordinates of $p$ by $x(p)$ and $y(p)$, respectively. Consider the input rectangle $B$ and let $c_1,\dots,c_4$ denote the corners of $B$ that are named in clockwise order such that $c_1$ is the top-right corner of $B$. Let $\Btop, \Bbottom, \Bleft$ and $\Bright$ denote the top, bottom, left and right sides of $B$, respectively. We refer to a port as a \emph{top port} (resp., \emph{bottom, left} and \emph{right port}), if it lies on $\Btop$ (resp., $\Bbottom, \Bleft$ and $\Bright$). Similarly, we call a leader a \emph{top leader} (resp., \emph{bottom, left} and \emph{right leader}), if it is connected to a top port (resp., bottom, left and right port).

Let $w,x,y,z$ be the number of ports on the top, right, bottom and left side of $B$; notice that $w+x+y+z=n$. We denote these ports as $t_1,\ldots,t_w,$  $r_1,\ldots,r_x, b_1,\ldots,b_y$ and $\ell_1,\ldots,\ell_z$ in clockwise order. See Figure~\ref{fig:partitionedExample}(a) for an example. We assume that the sites are in general position; i.e., the number of sites and ports on every horizontal (similarly, vertical) line that properly intersects $B$ is at most one. For the rest of the paper, whenever we say a rectangle, we mean an axis-aligned rectangle.

For a point $x$ inside $B$, consider the rectangle $B_i$ that is spanned by $x$ and $c_i$, where $1\le i \le 4$. Each rectangle $B_i$ contains only two types of ports; e.g., $B_1$ contains only top and right ports. A feasible solution for a solvable instance of a boundary labelling problem is called \emph{partitioned}, if there exists a point $x$ such that for each rectangle $B_i$, there exists an axis-aligned $xy$-monotone polygonal curve $C_i$ from $x$ to $c_i$ that separates the two types of leaders in $B_i$. That is, every pair of sites in $B_i$ that lie on different sides of $C_i$ are connected to ports that lie on different (but  adjacent) sides of $B$. See Figure~\ref{fig:partitionedExample}(c). We refer to the polygonal curve as the \emph{$xy$-separating curve}. Kindermann et al.~\cite{DBLP:journals/algorithmica/KindermannNRS0W16} observed that if an instance of a boundary labelling problem admits a feasible solution, then it must admit a partitioned solution. 
\begin{lemma}[Kindermann et al.~\cite{DBLP:journals/algorithmica/KindermannNRS0W16}]
\label{lem:partitionedSolution}
If there exists a feasible solution for the 1-bend four-sided boundary labelling problem, then there also exists a partitioned solution for the problem.
\end{lemma}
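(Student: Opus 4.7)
My plan is to take an arbitrary feasible solution $\mathcal{L}$ of the 1-bend four-sided labelling instance and construct from it a partitioned solution that realizes the same site-to-port assignment. Concretely, I must produce a point $x$ in the interior of $B$ together with four $xy$-monotone staircases $C_1,\dots,C_4$ from $x$ to the corners $c_1,\dots,c_4$ such that, within each quadrant $B_i$, the staircase $C_i$ separates the leaders of $\mathcal{L}$ according to which side of $B$ carries their ports.

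I would first choose $x$ using the bend points of $\mathcal{L}$. Since the 1-bend leader from a site $s$ to a top port $t$ bends at $(x(t),y(s))$, the minimum $y$-coordinate of a bend among top leaders equals $\min\{y(s) : s \text{ is assigned to the top}\}$; call this value $y_{\mathrm{top}}$, and define $y_{\mathrm{bot}}$, $x_{\mathrm{left}}$, and $x_{\mathrm{right}}$ symmetrically (taking the coordinate of the corresponding side of $B$ whenever that side carries no leader). I would pick $x$ to be any point in the open rectangle $R = (x_{\mathrm{left}},x_{\mathrm{right}}) \times (y_{\mathrm{bot}},y_{\mathrm{top}})$. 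Assuming $R$ is non-empty, every site in $B_i$ is automatically connected to a port on one of the two sides of $B$ incident to $c_i$: for instance, if $s\in B_1$ were connected to a bottom port, then the bend of that leader would be at height $y(s)$, giving $y(s) \le y_{\mathrm{bot}} < y(x) < y(s)$, a contradiction, and a symmetric argument excludes connections from $B_1$ to left ports.

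With $x$ fixed, each quadrant $B_i$ contains exactly two families of leaders, and I would build $C_i$ as the monotone interface between them. In $B_1$, for example, the top-leader bends and the right-leader bends are interleaved in a planar order imposed by non-crossing; threading an $xy$-monotone staircase from $x$ to $c_1$ that keeps top-leader bends on one side and right-leader bends on the other is a standard consequence of planarity of axis-aligned $L$-shaped paths. The staircases $C_2,C_3,C_4$ are produced by symmetric constructions in the remaining quadrants.

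The main obstacle I anticipate is showing that $R$ is non-empty, i.e.\ that $y_{\mathrm{bot}} < y_{\mathrm{top}}$ and, symmetrically, $x_{\mathrm{left}} < x_{\mathrm{right}}$. This is the only place where the non-crossing hypothesis on $\mathcal{L}$ enters essentially. I would argue by contradiction: assume $y_{\mathrm{bot}} \ge y_{\mathrm{top}}$, let $\lambda_t$ and $\lambda_b$ be the top and bottom leaders witnessing these extremes, and examine the horizontal segment of $\lambda_t$ at height $y_{\mathrm{top}}$ together with the vertical segment of $\lambda_b$ that descends from height $y_{\mathrm{bot}}$ through $y_{\mathrm{top}}$ to the bottom side of $B$. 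A case analysis on the relative $x$-coordinates of the two sites and the two ports should force a crossing of $\lambda_t$ and $\lambda_b$; in residual cases where these two leaders happen to avoid each other, a pigeonhole argument invoking a third leader whose site lies between $\lambda_t$ and $\lambda_b$ in $x$-order should produce the crossing. Once $R$ is known to be non-empty, the remaining steps follow from standard planarity arguments.
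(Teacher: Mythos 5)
There is a genuine gap, and it is exactly at the step you flag as the ``main obstacle'': the non-emptiness of $R$ is not merely hard to prove, it is false, because you commit to producing a partitioned solution \emph{with the same site-to-port assignment} as the given feasible solution $\mathcal{L}$. A planar 1-bend solution can connect a high site to a bottom port and a low site to a top port as long as the two leaders use disjoint vertical strips. Concretely, take $B=[0,10]^2$, sites $s_1=(2,8)$ and $s_2=(8,2)$, a bottom port $b=(1,0)$ and a top port $t=(9,10)$, and the assignment $s_1\to b$, $s_2\to t$: the leader of $s_1$ runs from $(2,8)$ to $(1,8)$ and down to $b$, the leader of $s_2$ from $(8,2)$ to $(9,2)$ and up to $t$, and they are disjoint. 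Here $y_{\mathrm{bot}}=8>2=y_{\mathrm{top}}$, so your rectangle $R$ is empty, and there is no third leader available for the pigeonhole fallback you propose. Worse, no choice of $x$ can make \emph{this} assignment partitioned (you would need $y(x)\ge 8$ so that $s_1$ lies in a bottom quadrant and simultaneously $y(x)\le 2$ so that $s_2$ lies in a top quadrant), so the defect cannot be repaired by choosing $x$ more cleverly: the assignment itself must change (here, swapping the two ports yields a partitioned solution). The same phenomenon already occurs inside a single quadrant for the adjacent two-sided case, so even your ``standard planarity'' step for building $C_i$ is not assignment-preserving.

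This is why the proof in Kindermann et al.\ (which this paper only cites) does not fix the assignment: it transforms an arbitrary feasible solution by exchange/swap arguments --- for instance, working with a feasible solution that is extremal (e.g.\ of minimum total leader length) and showing that any violation of separation allows a port swap that strictly improves the extremal quantity while preserving planarity. Any correct proof of Lemma~\ref{lem:partitionedSolution} has to include such a re-routing mechanism; your outline, which treats the assignment of $\mathcal{L}$ as immutable and tries to derive $y_{\mathrm{bot}}<y_{\mathrm{top}}$ and $x_{\mathrm{left}}<x_{\mathrm{right}}$ from non-crossing alone, cannot be completed.
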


Consider a two-sided problem where the ports are on $\Btop$ and $\Bright$. Assume that the problem has a feasible solution, and let $C$ be an $xy$-separating curve. Let $above(C)$ (resp., $below(C)$) be the polygonal regions above $C$ (resp., below $C$) that is bounded by $\Btop$ and $\Bleft$ (resp., by $\Bright$ and $\Bbottom$). Now, let $C_u$ (resp., $C_b$) be the $xy$-separating curve that minimizes the area of $above(C)$ (resp., $below(C)$). Given $C_u$ and $C_b$, we construct a sequence of rectangles as follows (see Figure~\ref{fig:partitionedExample}(d)).
\begin{itemize}
\item Each rectangle is a maximal rectangle between $C_u$ and $C_b$.
\item The bottom-left corner of $R_1$ is $c_3$. Since $R_1$ is maximal, it is uniquely determined.
\item Let $i>1$. We know that the top and right sides of $R_{i-1}$ are determined by a pair of leaders $L^t$ and $L^r$, respectively. Let $a\in L^t$ be the rightmost point on the top side of $R_{i-1}$, and let $b\in L^r$ be the topmost point on the right side of $R_{i-1}$. Then the rectangle $R_i$ is the maximal empty rectangle whose bottom-left corner is $(x(a),y(b))$ and that is bounded by $C_u$ and $C_b$.
\end{itemize}

We say that an instance of the problem is \emph{balanced} if it contains the same number of sites and ports. We next show the following result that will be useful in the following sections.
\begin{lemma}
\label{lem:onesided}
Let $P$ be a 1-bend one-sided boundary labelling problem with a constraint that the leftmost and  rightmost ports $a,c$ on $\Btop$ must be  connected to a pair of points $b,d$, respectively. Let $s$ be the rightmost or bottommost site of the problem excluding $b$ and $d$. If $P$ has a feasible solution satisfying the given constraint, then it also has a solution with connects $s$ to the first port $t_j$ (while walking from $c$ to $a$ along the boundary) that decomposes the problem into two balanced subproblems.
\end{lemma}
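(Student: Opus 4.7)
My plan is to start from the given feasible solution $S$ and show that $s$ can be rerouted to $t_j$ while preserving feasibility. Let $t_k$ be the port that $s$ is connected to in $S$. Because the leader from $s$ to $t_k$ separates the remaining instance into a left sub-instance (containing $a$ and $b$) and a right sub-instance (containing $c$ and $d$), and each sub-instance admits a planar 1-bend labelling inherited from $S$, both sub-instances contain the same number of sites as ports. Thus $t_k$ itself satisfies the balance property, and walking from $c$ to $a$ the first balance port $t_j$ is encountered at or before $t_k$, so $j \ge k$. If $j = k$ the solution $S$ is already the one required.

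For the case $j > k$, I would next show that every port $t_i$ with $k \le i \le j$ is a balance port. Let $L(i)$ denote the number of sites other than $s$ whose $x$-coordinate is smaller than $x(t_i)$, so that balance at $t_i$ is equivalent to $L(i) = i - 1$. As $i$ advances by one, $L(i)$ grows by $0$ or $1$ (by general position), hence $g(i) := L(i) - (i-1)$ is non-increasing with increments in $\{-1, 0\}$. The equalities $g(k) = g(j) = 0$ therefore force $g(i) = 0$ throughout $[k, j]$. As a byproduct, for each $k \le i < j$ the vertical strip $\{\, (x,y) : x(t_i) < x < x(t_{i+1}) \,\}$ contains exactly one site, which I call $\sigma_i$.

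The remaining task is to migrate $s$ from $t_k$ to $t_j$ by a sequence of one-step swaps. Concretely, I claim that whenever $s$ is connected to $t_i$ and both $t_i$ and $t_{i+1}$ are balance ports, the assignment can be changed so that $s$ is connected to $t_{i+1}$ and $\sigma_i$ is connected to $t_i$; iterating this $j - k$ times produces the desired solution. Planarity of the new leader of $s$ follows because $s$ is the rightmost or bottommost site, so its L-shaped leader to $t_{i+1}$ lies strictly below (or strictly to the right of) every horizontal leader segment of $S$; planarity of the new leader of $\sigma_i$ to $t_i$ follows because $\sigma_i$ is the unique site in the strip between $t_i$ and $t_{i+1}$, and no other leader has a vertical segment inside that strip. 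If the site previously connected to $t_{i+1}$ in the current assignment is distinct from $\sigma_i$, I redirect it to the port that $\sigma_i$ was using.

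The main obstacle is verifying this last redirection: when the site $\sigma' \neq \sigma_i$ previously assigned to $t_{i+1}$ has to be shifted to another port, one must show the shift introduces no crossings. The strip between $t_i$ and $t_{i+1}$ contains no site other than $\sigma_i$ and no vertical segment of any other leader, so the only possible conflict is with a horizontal segment crossing the strip at some $y$-coordinate. Here the feasibility of $S$ and the balance equality $g(i) = g(i+1)$ are key: they imply that any horizontal segment that crosses the strip already crossed $x = x(t_i)$ in $S$, so the slide of $\sigma'$'s vertical segment by one strip preserves planarity. A short case analysis separated by whether $s$ is the rightmost or the bottommost site handles the relative positions of $\sigma'$, $t_i$, and $t_{i+1}$, which completes the verification.
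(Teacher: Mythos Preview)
Your argument has a genuine gap. The claim ``As $i$ advances by one, $L(i)$ grows by $0$ or $1$ (by general position)'' is false: general position only guarantees that no two sites share an $x$-coordinate, not that at most one site lies in the vertical strip between two consecutive ports. Consequently $g(i)=L(i)-(i-1)$ need not be non-increasing, the conclusion that every port between $t_k$ and $t_j$ is a balance port does not follow, and the strip between $t_i$ and $t_{i+1}$ need not contain a unique site $\sigma_i$. Your entire migration step is built on the uniqueness of $\sigma_i$, so the remainder of the argument (including the already delicate redirection of $\sigma'$) collapses.

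The paper sidesteps this difficulty with a much simpler idea: it does \emph{not} try to preserve planarity along the way. Starting from the feasible solution in which $s$ is connected to some port $t_i$ (necessarily to the left of $t_j$), it performs a single swap, reconnecting $s$ to $t_j$ and the former occupant of $t_j$ to $t_i$. This may introduce crossings, but only in the rectangular region to the right of the original leader of $t_i$; there, the new leader of $s$ splits the region into two balanced one-sided instances (by the very definition of $t_j$). The proof then invokes the standard fact that any balanced one-sided instance admits a planar $1$-bend labelling, so the crossings can be locally repaired. No control over the number of sites per port-strip is needed, and no inductive migration is required.
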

\begin{proof}
First, assume that $s$ is the bottommost point (see Figure~\ref{fig:onesided}(a)--(b)). Assume for a contradiction that connecting $s$ to $t_j$ would not give a feasible solution, whereas there is another port $t_i$ such that connecting $s$ to $t_i$ would yield a feasible solution. 

Note that $t_i$ lies to the left of  $t_j$. Let $L$ be the leader of $t_i$. We swap the leaders of $t_i$ and $t_j$.  Such a swap may introduce crossings in the rectangular region $R$ to the right side of   $L$. However, after the swap both sides of the leader of $t_j$ in $R$ are balanced  one-sided problems (see Figure~\ref{fig:onesided}(c)). Such a solution  with crossings to a 1-sided boundary labelling problem can always be made planar by local swaps~\cite{DBLP:journals/jgaa/BenkertHKN09}. The proof for the case when $s$ is the rightmost point is the same. Figure~\ref{fig:onesided}(e)--(h) illustrate such a scenario.
\end{proof}

\begin{figure}[t]
\centering
\includegraphics[width=\textwidth]{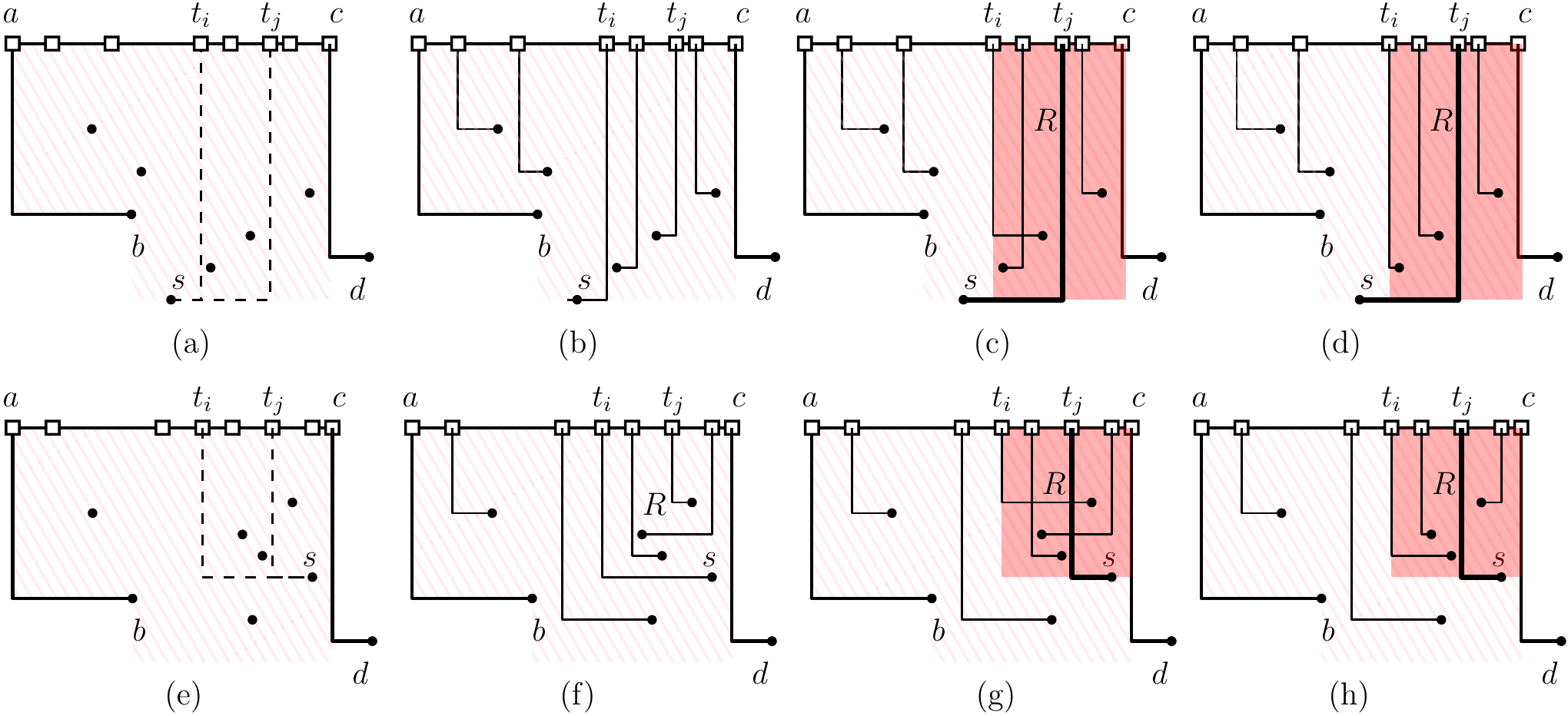}
\caption{An illustration in supporting the proof of Lemma~\ref{lem:onesided}.}
\label{fig:onesided}
\end{figure}

%%%%%%%%%%%%%%%%%%%%%%%%%%%%%%%%% NEW SECTION
\section{Three-Sided Boundary Labelling}
\label{sec:3sided}
In this section, we give an $O(n^3\log n)$-time algorithm for the three-sided boundary labelling problem. We assume that the ports are located on $\Bleft,\Btop$ and $\Bright$. Kindermann et al.~\cite{DBLP:journals/algorithmica/KindermannNRS0W16} gave an $O(n^4)$-time algorithm for this problem as follows. Consider the grid induced by a horizontal and a vertical line through every port and site. For each node of this grid, they partition the three-sided problem  into an $L$-shaped two-sided and a $\Gamma$-shaped two-sided problem (Figure~\ref{fig:3sidedProblemTypes}(a)--(b)). They showed that the three-sided problem is solvable if and only if there exists a grid node whose two two-sided problems both are solvable.

\subsection{Algorithm Overview}
We also start by considering every grid node $x$, but we employ a dynamic programming approach that expresses the resulting two-sided subproblems by $x$ and a port. We show that the two-sided subproblems can have $O(1)$ different types. This gives us $O(1)$ different tables, each  of size $O(n^3)$. We show that the running time to fill all the entries is $O(n^3\log n)$. For a grid node $x$, let $x'$ be the projection of $x$ onto $\Btop$. The point $x'$ splits the ports on $B$ into two sets: those to the left of $x'$ and those to the right of it. This means that the line segment $xx'$ can be extended to an axis-aligned curve with two bends that splits the problem into two balanced \emph{$L$-} and \emph{$\Gamma$-shaped} two-sided subproblems; see Figure~\ref{fig:3sidedProblemTypes}. Kindermann et al.~\cite{DBLP:journals/algorithmica/KindermannNRS0W16} observed that such a balanced partition is unique. Hence if there exists a balanced partition by extending $xx'$ to a 2-bend curve, then there must be a unique position where the number of sites to the left of the curve  matches the number of ports to the left of $x'$.

In the following, we describe the details of solving the $L$- and $\Gamma$-shaped two-sided problems. W.l.o.g., we assume that the two-sided problem contains the top-left corner $c_4$ of $B$.  While decomposing an $L$-shaped problem, we will reduce it either into  a smaller $L$-shaped problem or  a rectangular two-sided problem. Decomposition of $\Gamma$-shaped problem is more involved.

\begin{figure}[pt]
\centering
\includegraphics[width=\textwidth]{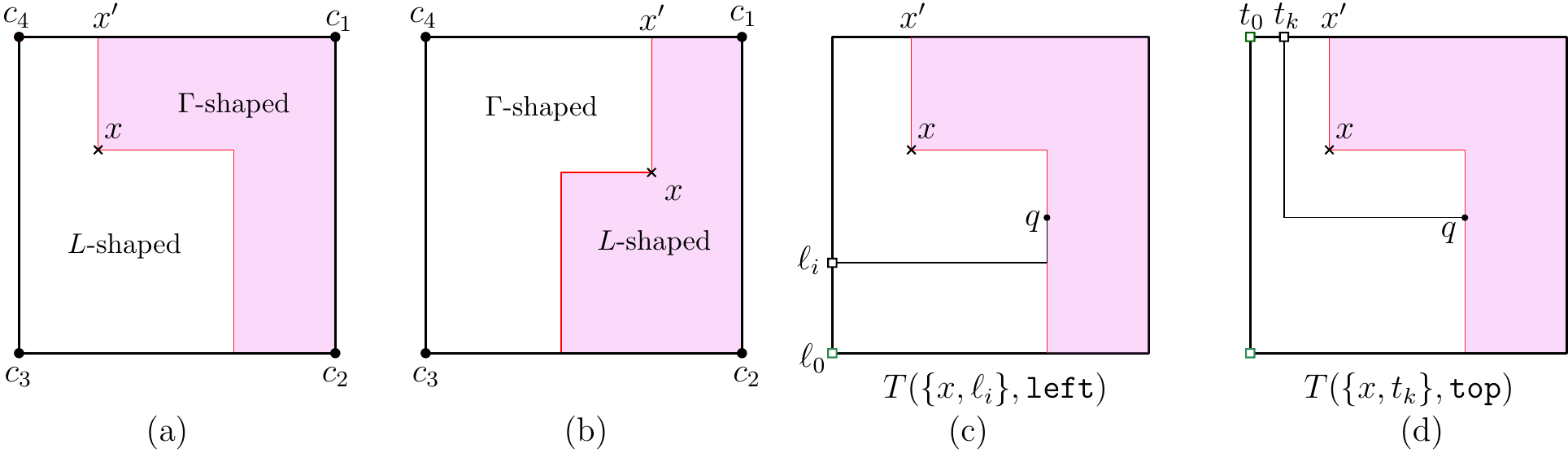}
\caption{(a)--(b) An $L$- and a  $\Gamma$-shaped  problem. (c)--(d)  Encoding of the $L$-shaped problems.  The  site $q$ that determines the boundary of the $L$-shape can be recovered from the encoding.}
\label{fig:3sidedProblemTypes}
\end{figure}

\subsection{Solving an $L$-Shaped Problem}
\label{subsec:3sidedTurnsLeft}
We denote an $L$-shaped subproblem with a grid node $x$ and a port $b$, where $b$ lies either on $\Btop$ or $\Bleft$. If $b$ is a port on $\Bleft$ (i.e., $b=\ell_i$ for some $i$), then we denote the subproblem by $T(\{x,\ell_i\},\lleft)$ in which the bottom side is determined by a horizontal line through $\ell_i$ (e.g., see Figure~\ref{fig:3sidedProblemTypes}(c)). Here, the last parameter denotes whether the port belongs to $\Bleft$ or $\Btop$. Initially, we assume a dummy port $\ell_0$ located at $c_3$ and so our goal is to compute $T(\{x,\ell_0\},\lleft)$. On the other hand, if $b=t_k$ for some $k$ (i.e., it is a port on $\Btop$), then we define the problem as $T(\{x,t_k\},\ttop)$ (see Figure~\ref{fig:3sidedProblemTypes}(d))). Here, the goal is to compute $T(\{x,t_0\},\ttop)$, where $t_0$ is a dummy port at $c_4$.

To decompose $T(\{x,\ell_i\},\lleft)$, we find  the rightmost site $p$  of the subproblem in $O(\log n)$ time (Figure~\ref{fig:3sidedDecompose}(a)). We first do some preprocessing, and then consider two cases depending on whether $p$ lies to the left or right side of the vertical line through $x$. 

For the preprocessing, we  keep the sites and ports in a range counting data structure that supports $O(\log n)$ counting query~\cite{Berg08}. Second, for each horizontal slab determined by a pair of horizontal grid lines $h,h'$ (passing through sites and ports), we keep the points inside the slab in a sorted array $M(h,h')$, which takes $O(n^3\log n)$ time.  We now use these data structures to find $p$. We first compute the sites and ports in the rectangle determined by the diagonal $c_4 x$, and then find the number of points needed in $O(\log n)$ time. Finally, we find a site $r$ (that balances the number of sites and points) in $O(\log n)$ time by a binary search in the array $M$ for the slab determined by the horizontal grid lines through $\ell_i$ and $x$. If $r$ lies to the right of the vertical line through  $x$, then $r$ is the desired point $p$. Otherwise, $p$ lies to the left of the vertical line through $x$, and we can find $p$ by searching in the array $M$ for the slab determined by the horizontal grid lines through $\ell_i$ and $t_0$.  We will frequently search for such a unique site throughout the paper and so we will use a similar preprocessing. 

\paragraph{Case 1 ($p$ lies to the right side of the vertical line through $x$).} We connect $p$ to the ``right'' port: after connecting, the resulting subproblems are balanced; i.e., the number of sites in each resulting subproblem is the same as the number of ports of that subproblem. There might be several ports that are right in this sense, but one can apply Lemma~\ref{lem:onesided} (assuming dummy leaders as boundary constraints) to show that the subproblem has a feasible solution if and only if there exists a feasible solution connecting $p$ either to the bottommost or to the rightmost  port that satisfies the balanced condition.   Once we find $p$ and the appropriate port $c$ for $p$, there are three possible scenarios as illustrated in Figure~\ref{fig:3sidedDecompose}. We can decompose the subproblem using the following recursive formula (depending on whether $p$ is connected to $\Bleft$ or $\Btop$):
\begin{equation*}
T(\{x,\ell_i\},\lleft) =   \begin{cases} T(\{x,\ell_j\},\lleft)  \wedge T' 
& \text{e.g., see  Figure~\ref{fig:3sidedDecompose}(a)--\ref{fig:3sidedDecompose}(b), or }\\
T(\{x,t_j\},\ttop) \wedge T(\{y,\ell_i\},\lleft) & \text{e.g., see Figure~\ref{fig:3sidedDecompose}(c)}
\end{cases}
\end{equation*}
Here, $T(\{x,t_j\},\ttop)$ is an $L$-shaped one-sided problem and $T'$ is a rectangular one-sided problem. There always exists a solution for the balanced rectangular one-sided problem~\cite{DBLP:journals/jgaa/BenkertHKN09}, and thus $T'$ can be considered as true. Since $T(\{x,t_j\},\ttop)$ is balanced, we can show that there always exists a solution for $T(\{x,t_j\},\ttop)$, as follows. First assume that the vertical line through $x$ does not pass through a port (see Figure~\ref{fig:3sidedTypeTwo}(a)). Let $t$ be the rightmost port in $T(\{x,t_j\},\ttop)$, and let $Q$ be the set of points inside the rectangle $R_x$ determined by diagonal $xq$. Scale down the rectangle  $R_x$ horizontally and translate  the rectangle inside the vertical slab determined by the vertical lines through $t$ and $x$.  There always exists a solution for the balanced rectangular one-sided problem~\cite{DBLP:journals/jgaa/BenkertHKN09}, we can translate the points back to their original position extending the leaders as necessary. The case when the vertical line through $x$  passes through a port $t_x$ can be processed in the same way, by first connecting $t_x$ to the topmost point of $R_x$, and then choosing the port immediately to the left of $t_x$ as $t$.  We thus have the following recurrence formula. 
\begin{equation*}
T(\{x,\ell_i\},\lleft)=\begin{cases} T(\{x,\ell_j\},\lleft)   & \text{ if $\ell_j$ exists, e.g., see Figure~\ref{fig:3sidedDecompose}(a)--\ref{fig:3sidedDecompose}(b), or}\\
T(\{y,\ell_i\},\lleft) & \text{ if $t_j$ exists, e.g., see  Figure~\ref{fig:3sidedDecompose}(c)}
\end{cases}
\end{equation*}
We now show how to find the point $p$ and then decompose  $T(\{x,\ell_i\},\lleft)$. The table $T(\{x,\ell_i\},\lleft)$ is of size $O(n^3)$. We will show that each subproblem  can be solved by a constant number of table look-ups, and these entries  can be found in $O(\log n)$ time. Hence, the overall computation takes $O(n^3 \log n)$ time.

\begin{figure}[t]
\centering
\includegraphics[width=\textwidth]{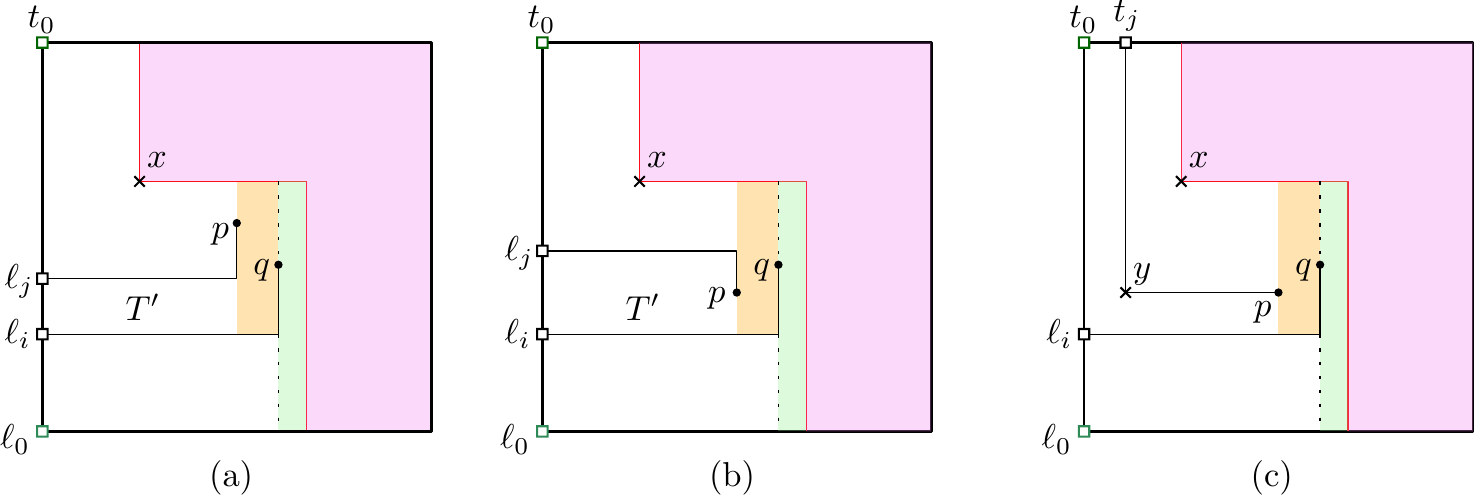}
\caption{The decomposition of an $L$-shaped problem when $b=\ell_i$.}
\label{fig:3sidedDecompose}
\end{figure}

\begin{figure}[h]
\centering
\includegraphics[width=\textwidth]{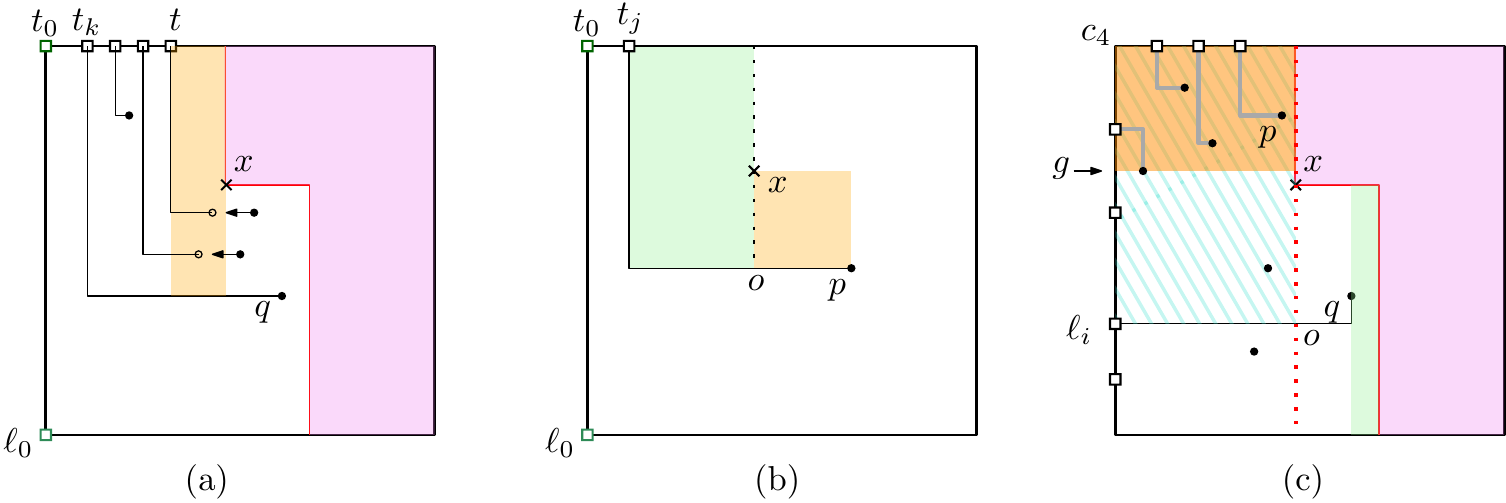}
\caption{(a) Reducing $T(\{x,t_j\},\ttop)$ to a rectangular one-sided problem, and (b) finding the port $c$ for $p$ when $c=t_j$ for some $j$. (c) Precomputation of the two-sided problems.}
\label{fig:3sidedTypeTwo}
\end{figure}

We now show how to find the ``right'' port  $c$ for $p$. First, assume that $c$ is a port $\ell_j$ on $\Bleft$ (see Figure~\ref{fig:3sidedDecompose}(a)--(b)). We need another $O(n^3\log n)$-time preprocessing as follows. Define a table $M'(s,\ell_i)$, where $s$ is a site, and $\ell_i$ is a port on $\Bleft$. At the entry $M'(s,\ell_i)$, we store  the port $\ell_j$ such that $j>i$ is the smallest index for which the rectangle defined by $\Bleft$, the horizontal lines through $\ell_i,\ell_j$ and the vertical line through $s$ contains exactly $(j-i-1)$ sites. We set $j$ to 0 when no such port $\ell_j$ exists. The table $M'$ has size $O(n^2)$ and we can fill each entry of the table in $O(n\log n)$ time; hence, we can fill out the entire $M'$ in $O(n^3\log n)$ time. Observe that the port $c$ for $p$ is stored in $M'(p,\ell_i)$.

Consider now the case when $c$ is a port $t_j$ on $\Btop$ (see Figure~\ref{fig:3sidedDecompose}(c)). We again rely on an $O(n^3 \log n)$-time preprocessing. For every site $s$ and a vertical grid line $\ell$ (passing through a port or a site) to the left of $s$, we keep a sorted array $M''(s,\ell)$ of size $O(n)$. Each element of the array corresponds to a rectangle bounded by the horizontal line through $s$,  $\Btop, \ell$, and another vertical grid line $\ell'$ through a port to the left of $\ell$. The rectangles are sorted based on the difference between the sites and ports and then by the $x$-coordinate of $\ell'$.  Consequently, to find $c(=t_j)$, we can look for the number of sites in the rectangle determined by the diagonal $px$ (see Figure~\ref{fig:3sidedTypeTwo}(b)), and then binary search for that number in the precomputed array for $M''(p,x)$.

\paragraph{Case 2 ($p$ lies to the left side of the vertical line through $x$).} Let $o$ be the intersection of the vertical line through $x$ and the leader connecting $\ell_i$ and $q$ (see Figure~\ref{fig:3sidedTypeTwo}(c)). Since $p$ is the rightmost point, it suffices to solve the rectangular two-sided problem $P$ determined by the rectangle with diagonal $oc_4$ (shown in falling pattern). We will determine whether a solution exists in $O(1)$ time based on some precomputed information, as follows.

For each vertical grid line, we will precompute the  topmost horizontal grid line $g$ such that the two-sided problem (shown in orange) determined by these lines, $\Btop$ and $\Bleft$ has a feasible solution. If $g$ lies above $\ell_i$, then the two-sided problem is  solvable (as the remaining region determines a balanced one-sided problem); otherwise, it is not solvable. We will use the known  $O(n^2)$-time algorithm~\cite{DBLP:journals/algorithmica/KindermannNRS0W16} to check the feasibility of a two-sided problem, which looks for a ``partitioned'' solution. However, we do not necessarily require our  two-sided problem to be partitioned.

We now show that the precomputation takes $O(n^3 \log n)$ time. For every vertical grid line $v$, we first compute a sorted array $A_v$ of horizontal grid lines such that the two-sided problem determined by  $v$ and each element of $A_v$ is balanced. This takes $O(n \log n)$ time for $v$ and $O(n^2 \log n)$ time for all vertical grid lines. For each $v$, we then do a binary search on $A_v$ to find the topmost grid line $g$ such that the corresponding two-sided problem has a feasible solution. Since computing a solution to the two-sided problem takes $O(n^2)$ time~\cite{DBLP:journals/algorithmica/KindermannNRS0W16}, $g$ can be found in $O(n^2 \log n)$ time for $v$, and in $O(n^3 \log n)$ time for all the vertical grid lines.

\paragraph{Remark.} While decomposing $L$-shapes, we always find the rightmost point $p$. If the $x$-coordinate of $p$ is larger than that of $x$, then the subproblems $T(\{x,\ell_j\},\lleft)$ and $T(\{y,\ell_i\},\lleft)$ are also $L$-shaped. Otherwise, the problem reduces to a rectangular one-sided or two-sided problem. Hence a $\Gamma$-shape problem does not appear during the decomposition of $L$-shaped problems.

\subsection{Solving a $\Gamma$-Shaped Problem}
\label{subsec:3sidedTurnsRight}
Consider the $\Gamma$-shaped problem containing the top-left corner $c_4$; see e.g. Figure~\ref{fig:turnsRight}(a). Let $o$ be the projection of $x$ onto $\Bbottom$, and let $y$ be the other bend of the $\Gamma$ shape. In the following, we refer to the rectangle with diagonal $oy$ as the \emph{forbidden region}.

Kindermann et al.~\cite[Lemma 8]{DBLP:journals/algorithmica/KindermannNRS0W16} observed that  there must exist an axis-aligned $xy$-monotone curve $C$ that connects $c_4$ to $x$ such that the sites above $C$ are connected to top ports and the sites below $C$ are connected to left ports. We extend $C$ to $o$ along the vertical line $xo$ (e.g., see   Figure~\ref{fig:turnsRight}(b)). Since no leader will enter the forbidden region, we can now consider $C$ as a separating curve  for a two-sided problem determined by the rectangle with diagonal $c_4o$. For any partitioned solution, we can compute a sequence of maximal empty rectangles, as we discussed in Section~\ref{sec:prelimins}, such that the lower-right corner of $R_1$ coincides with $o$. To decompose the problem, consider the first rectangle $R_1$ in this sequence. The rectangle $R_1$ can either cover the forbidden region entirely or partially in three different ways; see Figure~\ref{fig:turnsRight}(b). Since the separating curve through the grid point $x$ (Figure~\ref{fig:turnsRight}(a)) determines a partition, the leaders of the $\Gamma$-shape must not enter into the forbidden region. Therefore, it suffices to consider an empty rectangle $R_1$ that entirely covers the forbidden region.

\begin{figure}[pt]
\includegraphics[width=\textwidth]{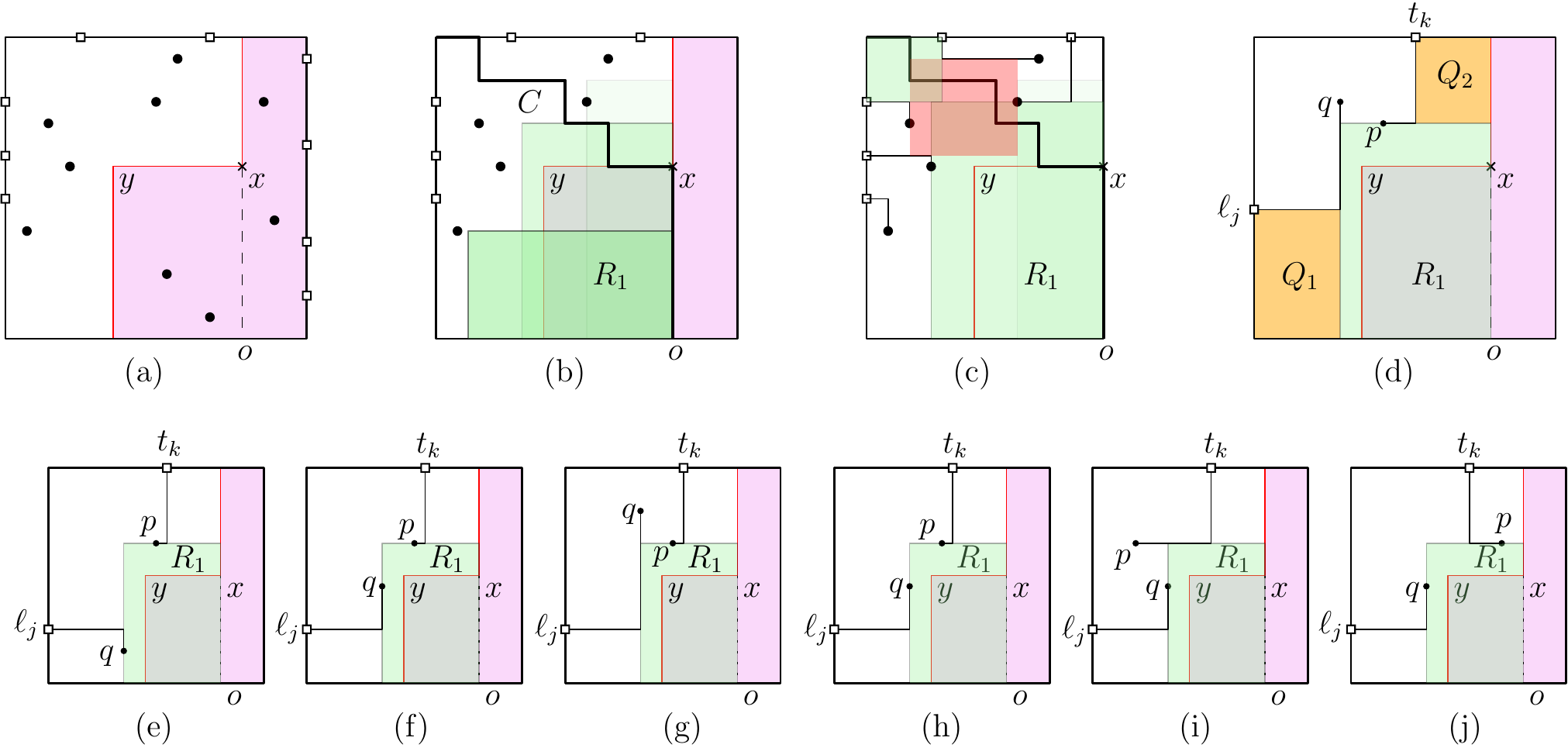}
\caption{(a) A $\Gamma$-shaped problem. (b)--(c) Three possibilities for the empty rectangle $R_1$, and a sequence of maximal rectangles corresponding to the separating curve. (d) Rectangle $R_1$ decomposes the problem into two one-sided subproblems and a smaller two-sided subproblem. (e)--(j) All possible cases for the two leaders determining the top and left sides of $R_1$.}
\label{fig:turnsRight}
\end{figure}

Since $R_1$ is maximal,  the top or left side of $R_1$ must contain a site (e.g., see   Figure~\ref{fig:turnsRight}(c)); because otherwise, the leaders that determines the top and left side of $R_1$ will cross. Assume w.l.o.g. that the top side of $R_1$ contains a site $p$ that is connected to a port $b(=t_k)$. Notice that $b$ cannot be a left port because it contradicts the existence of the curve $C$. Moreover, the left side of $R_1$ either contains a site or it is aligned with a leader that connects a port $\ell_j$ to a site $q$. The bottom row in Figure~\ref{fig:turnsRight} shows all possible scenarios; notice that $q$ cannot be connected to a top port, again because of having $C$. This decomposes the problem into three subproblems (see Figure~\ref{fig:turnsRight}(d)). Two one-sided subproblems $Q_1$ and $Q_2$, and a smaller two-sided subproblem. $Q_1$ is bounded by the leader connecting $\ell_j$ to $q$, the left side of $R_1$, $\Bbottom$ and $\Bleft$. The subproblem $Q_2$ is bounded by the leader connecting $p$ to $t_k$, $\Btop$, the curve determined by $x$, and the top side of $R_1$.

The smaller two-sided subproblem  is bounded by $\Bleft, \Btop$, the leaders incident to $t_k$ and $\ell_j$, and the boundary of $R_1$. We solve such two-sided problems by finding a sequence of maximal empty rectangles as described in Section~\ref{sec:prelimins}. The idea is inspired by Bose et al.'s~\cite{BoseCK0M18} approach to solve a two-sided  problem using a compact encoding for the table. Since we do not optimize the sum of leader lengths, our approach is much simpler.

\paragraph{Decomposing two-sided subproblems.} To decompose the two-sided subproblems, we use the following observation.
\begin{lemma}[Bose et al.~\cite{BoseCK0M18}]
\label{lem:linear}
If a two-sided boundary labelling problem with sites on $\Btop$ and $\Bleft$ has a feasible solution,  then there exists a partitioned solution where every maximal empty rectangle contains a site either on its top or on its left side. 
\end{lemma}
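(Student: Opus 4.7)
The plan is to start from any feasible solution, invoke Lemma~\ref{lem:partitionedSolution} to secure a partitioned solution, and then show that the two extreme $xy$-separating curves associated with that solution force the desired site-on-boundary property on every rectangle of the induced maximal-empty-rectangle sequence. Concretely, I would adapt the construction given at the end of Section~\ref{sec:prelimins} to the present setting (ports on $\Btop$ and $\Bleft$ rather than $\Btop$ and $\Bright$): let $C_u$ be the $xy$-separating curve that minimizes the area of the region above it, let $C_b$ be the one that minimizes the area below it, and build $R_1, R_2, \ldots$ as the sequence of maximal empty rectangles sandwiched between $C_u$ and $C_b$, with $R_1$ anchored at the corner opposite to the bend and each $R_{i+1}$ determined from $R_i$ exactly as in the definition.

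Next, I would observe a structural property of each $R_i$: by maximality, either its top side is contained in $\Btop$ or it coincides with the horizontal portion of some top leader $L^t$ joining a top port to a site $s^t$, and symmetrically either its left side is contained in $\Bleft$ or it coincides with the vertical portion of some left leader $L^\ell$ joining a left port to a site $s^\ell$. The boundary cases where a side of $R_i$ lies on $\Btop$ or $\Bleft$ directly imply the conclusion (since a port on that side forces the associated site onto $R_i$'s boundary), so the interesting case is when both the top and left of $R_i$ are defined by leaders.

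For a proof by contradiction, pick the smallest index $i$ for which $R_i$ has no site on either its top or its left side. Then $s^t$, being an endpoint of the horizontal portion of $L^t$ at the same height as the top of $R_i$, must lie strictly to the right of the top-right corner of $R_i$; symmetrically, $s^\ell$ lies strictly below the bottom-left corner of $R_i$. In that situation I would exhibit a modified separating curve $C_u'$ that routes more tightly along the top of $R_i$: because the thin horizontal strip immediately above $R_i$ is free of sites (any site there would have either contradicted the maximality of $R_i$ or forced $s^t$ to land on the top side of $R_i$), the portion of $C_u$ running along the top of $R_i$ can be pushed strictly upward. The resulting $C_u'$ has strictly smaller $above$-area than $C_u$, contradicting the defining minimality of $C_u$. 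A symmetric argument with $C_b$ handles the case where the deficiency is on the left side.

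The main obstacle will be formalizing the curve-modification step rigorously. I need to verify that (i) $C_u'$ remains an axis-aligned $xy$-monotone polygonal curve from the appropriate corner of $B$ to the bend point; (ii) the top-port/left-port partition of sites induced by $C_u'$ agrees with that of $C_u$, which reduces to checking that no site lies in the strip through which $C_u$ has been lifted --- a consequence of the site-free assumption on $R_i$'s top together with the definition of $R_i$ as a maximal empty rectangle; and (iii) choosing $i$ minimal keeps the preceding rectangles $R_1, \ldots, R_{i-1}$ intact, so the modification is genuinely local and the contradiction with the minimality of $C_u$ is valid. Once these are in place, iterating the argument over all indices $i$ establishes the lemma.
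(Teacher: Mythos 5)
Your plan founders at precisely the step you identified as the ``main obstacle'': the lift of $C_u$ is not blocked by sites but by a leader, and therefore the contradiction with minimality never materializes. When the top side of $R_i$ is not on $\Btop$, the reason $R_i$ cannot be extended upward is that the horizontal segment of the top leader $L^t$ lies along that side; the absence of a site on the side only means that $s^t$ sits beyond one of the two top corners, not that the strip immediately above is unoccupied. An $xy$-separating curve must keep the two \emph{types of leaders} (not merely the sites) on opposite sides, so $C_u$ cannot be pushed through $L^t$'s horizontal segment; the area-minimal $C_u$ is already flush against $L^t$ there, and no curve $C_u'$ with strictly smaller area above it exists. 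Consequently your conditions (i)--(iii) cannot all be met, and the argument collapses. Two auxiliary steps are also unsupported: the claim that $s^t$ must lie strictly to the right of the top-right corner and $s^\ell$ strictly below the bottom-left corner is asserted, yet ruling out the opposite placements is essentially the content of the lemma; and the dismissal of the cases where a side of $R_i$ lies on $\Btop$ or $\Bleft$ is wrong as stated, since no port can lie on the portion of $\partial B$ bounding an empty rectangle of the corridor (its leader would enter the rectangle), so there is no ``associated site'' to invoke, and those cases still require an argument about the other side.

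For comparison: the paper does not reprove this lemma (it imports it from Bose et al.), but the intended argument, sketched for $R_1$ in Section~\ref{subsec:3sidedTurnsRight}, is local and does not modify anything. Fix any partitioned solution (Lemma~\ref{lem:partitionedSolution}) and the rectangle sequence of Section~\ref{sec:prelimins}; if a maximal empty rectangle had no site on its top or left side, then its top side would be supported by the horizontal segment of a top leader and its left side by the vertical segment of a left leader, and both supporting segments would have to reach the top-left corner of the rectangle, forcing the two leaders to intersect and contradicting the planarity of the feasible solution. No re-routing of $C_u$ or of the solution is needed, and no appeal is made to the extremality of $C_u$, $C_b$ beyond their role in defining the rectangles. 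If you want to salvage your write-up, replace the curve-lifting step by this crossing argument and treat the $\Btop$/$\Bleft$ boundary configurations separately rather than declaring them immediate.
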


We first describe how to represent the two-sided problems. If the maximal empty rectangle $R$ contains a site on its top side (Figure~\ref{fig:turnsRight}(g)), then we represent the problem with the ports and the $x$-coordinate of the left side of $R$, i.e., $t_k,\ell_j$ and $x$-coordinate of $q$. Otherwise, we use the ports   and the $y$-coordinate of the top side of $R$.

Formally, define a table $T_y(\ell_i,t_j,z)$ (resp., $T_x(\ell_i,t_j,z)$), where $\ell_i$ and $t_j$ are two ports and $z$ denotes the $y$-coordinate (resp., $x$-coordinate) of the site to which the port $t_j$ (resp., $\ell_i$) is connected (see Figure~\ref{fig:3sidedTwoSided}). Given  $T_x(\ell_i,t_j,z)$ (similarly, $T_y(\ell_i,t_j,z)$), we can completely determine the problem boundary, as follows. Note that by definition  the left side of $R$ (the rectangle that defined the subproblem, shown in pink) has the $x$-coordinate $z$. Hence the site of $t_j$  lies on the top side of $R$ (by Lemma~\ref{lem:linear}). We can thus  perform a binary search to find this site. Once we have the two sites (and so the two incident leaders), we can find the lower right corner $q$ for the next candidate rectangle as defined in Section~\ref{sec:prelimins}.

\begin{figure}[h]
\includegraphics[width=\textwidth]{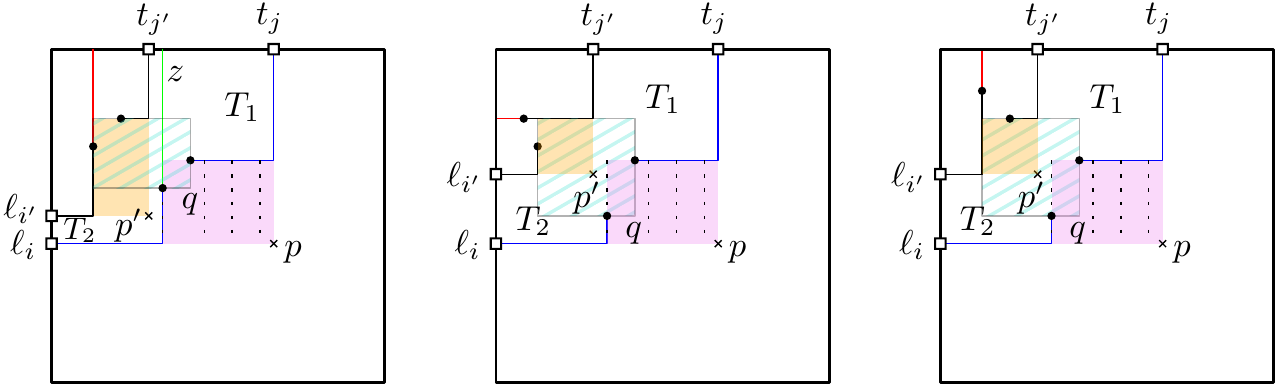}
\caption{An illustration for solving the two-sided subproblems arising in a $\Gamma$-shaped problem.}
\label{fig:3sidedTwoSided}
\end{figure}

We now show how to compute $T_x(\ell_i, t_j,z)$; the other is symmetric. After we determine the lower right corner $q$ for the next subsequent rectangle, we look for all candidate maximal empty rectangles $R'$ with lower right corner $q$. Consider now one such candidate  rectangle $R'$ (shown in orange in Figure~\ref{fig:3sidedTwoSided}), and let $\ell_{i'}$ and $t_{j'}$ be the ports defining its left and top sides, respectively. This now decomposes $T_x(\ell_i,t_j,z)$ into three subproblems; namely, two one-sided subproblems defined by $\ell_i,\ell_{i'}$ and $t_{j},t_{j'}$, and a  two-sided subproblem. By Lemma~\ref{lem:linear}, $R'$ contains a site on its top or  left side. If the top side of $R'$ contains a site, then we look-up the entry $T_x(\ell_{i'},t_{j'},z)$; otherwise, we look-up the entry $T_y(\ell_{i'},t_{j'},z)$. We hence have the following recurrence relation.
\begin{equation*}
T_x(\ell_i,t_j,z) = 
\begin{cases} T_x(\ell_{i'},t_{j'},z) \wedge T_1 \wedge T_2 & \text{if the top side of $R'$ contains a site,}\\
& \text{e.g., see Figure~\ref{fig:3sidedTwoSided}(left) and Figure~\ref{fig:3sidedTwoSided}(right)}\\
T_y(\ell_{i'},t_{j'},z) \wedge T_1 \wedge T_2 & \text{if the top side of $R'$ does not contain a site,}\\
& \text{e.g., see  Figure~\ref{fig:3sidedTwoSided}(middle)}.
\end{cases}
\end{equation*}

Here, $T_1$ and $T_2$ are one-sided problems, bounded by two leaders and a maximum empty rectangle. In the following, we discuss how to solve these problems.

\paragraph{Solving one-sided problems.} If $T_i$, where $i\in \{1,2\}$, is a balanced rectangular one-sided problem (Figure~\ref{fig:ones}(d)--(e)), then it has a feasible solution~\cite{DBLP:journals/jgaa/BenkertHKN09}. If $T_i$ is $L$-shaped and bounded by one side of the maximum empty rectangle (Figure~\ref{fig:ones}(f)), then it can be reduced to a rectangular one-sided problem, as described in Section~\ref{subsec:3sidedTurnsLeft}. Otherwise, the boundary of $T_i$ contains two sides of the maximal empty rectangle (Figure~\ref{fig:ones}(a)--(b)), or one side of the maximal empty rectangle and two vertical segments of the two leaders (Figure~\ref{fig:ones}(c)). In such a case, we can  decide whether it has a  feasible solution in $O(\log n)$ time, as follows.

Let $o,o'$ be the bottom-left corner of the empty rectangle $R'$ and the bend point of the leader of $\ell_i$ (Figure~\ref{fig:ones}(a)--(c)), respectively. Let $c$ be the number of sites in the open rectangle determined by the diagonal $oo'$. Let $\ell$ be a port such that there are $c$ ports between $\ell_i$ and $\ell$ (including $\ell$). If the $y$-coordinate of $\ell$ is not larger than  that of $o$ (or $a$, if $a$ is below $o$), then we can decompose the problem into two subproblems such that both of them have a feasible solution. One of these subproblems is a   rectangular one-sided problem  (shown in pink), and the other is an $L$-shaped one-sided problem (shown in green). Both of them are balanced. The existence of a solution is immediate from~\cite{DBLP:journals/jgaa/BenkertHKN09} for the rectangular case. The existence of a solution for the  $L$-shaped case is by the reduction to  a one-sided rectangular problem (Section~\ref{subsec:3sidedTurnsLeft}). If the $y$-coordinate of $\ell$ is larger than that of $o$ (or that of $a$, if $a$ is below $o$), then there does not exist a solution with leaders not intersecting the empty rectangle $R'$ (or the leader of $\ell_{i'}$).

\begin{figure}[h]
\centering
\includegraphics[width=\textwidth]{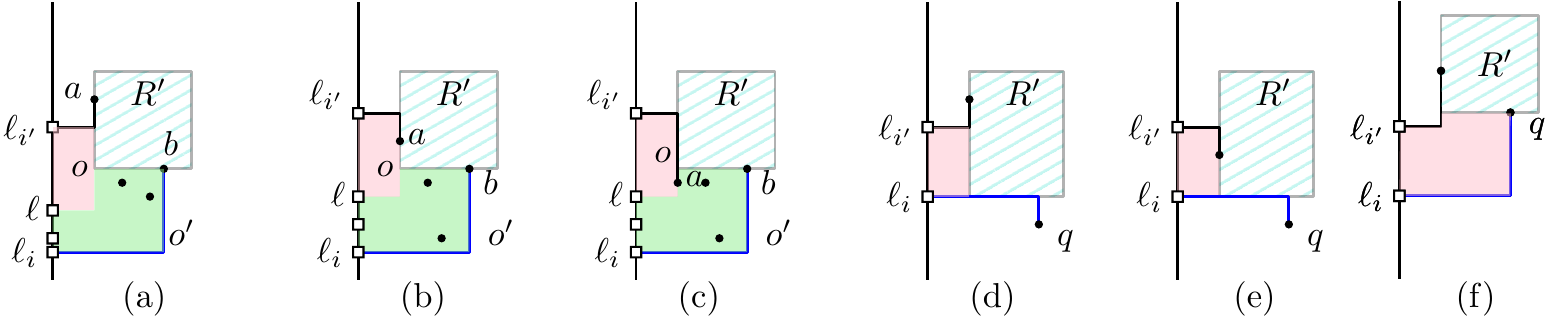}
\caption{An illustration for solving the one-sided subproblems arising in a $\Gamma$-shaped problem.}
\label{fig:ones}
\end{figure}

Note that the table $T_x$ and $T_y$ are of size $O(n^3)$ and each subproblem can be solved by examining a set of candidate rectangles. Bose et al.~\cite{BoseCK0M18} observed that the number of such candidate rectangles is $O(n^3)$, and while searching for candidate rectangles, no rectangle appears  more than once. Thus the total number of table look ups is bounded by $O(n^3)$. Furthermore, all these rectangles can be precomputed in $O(n^3 \log n)$ time~\cite{BoseCK0M18}. Hence, the overall time to solve a $\Gamma$-shaped problem is $O(n^3\log n)$. The following theorem summarizes the result of this section.
\begin{theorem}
\label{thm:3sided}
Given a 1-bend three-sided boundary labelling problem with $n$ sites and $n$ ports, one can find a feasible solution (if exists) in $O(n^3\log n)$ time.
\end{theorem}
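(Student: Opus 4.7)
The plan is to enumerate every grid node $x$ formed by the $O(n^2)$ intersections of horizontal and vertical lines through the sites and ports, and for each $x$ use the uniqueness of the balanced partition observed by Kindermann et al.\ to split the three-sided instance into an $L$-shaped and a $\Gamma$-shaped two-sided subproblem sharing $x$ as their common corner. The outer enumeration is absorbed into the DP tables themselves: the $L$-shaped subproblems are indexed as $T(\{x,\ell_i\},\lleft)$ and $T(\{x,t_k\},\ttop)$, and the $\Gamma$-shaped subproblems (after an initial decomposition) reduce to rectangular two-sided subproblems tabulated as $T_x(\ell_i,t_j,z)$ and $T_y(\ell_i,t_j,z)$. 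Each of these four tables has size $O(n^3)$, so once I argue that every entry is computable in amortized $O(\log n)$ time, the claimed bound follows.

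For the $L$-shaped family I will repeatedly extract the rightmost site $p$ of the subproblem using a range-counting structure together with slab-sorted arrays $M(h,h')$ precomputed in $O(n^3\log n)$ time. Lemma~\ref{lem:onesided} restricts the candidate partner for $p$ to either the bottommost or rightmost balanced port, which I locate in $O(\log n)$ via the auxiliary tables $M'$ and $M''$, both built in $O(n^3\log n)$. The recurrence then either invokes a strictly smaller $L$-shaped entry paired with a rectangular balanced one-sided side problem (solvable by the classical one-sided algorithm) or a one-sided $L$-shape which, as sketched in Figure~\ref{fig:3sidedTypeTwo}(a), flattens to a rectangular one-sided instance and is therefore always feasible. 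If instead $p$ lies to the left of the vertical line through $x$, the residual is rectangular two-sided; for this case I precompute, for every vertical grid line $v$, a sorted list $A_v$ of horizontal grid lines giving balanced rectangles, and binary-search using Kindermann et al.'s $O(n^2)$ two-sided decision algorithm to pinpoint the highest feasible horizontal line, all within $O(n^3\log n)$.

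For the $\Gamma$-shaped case I follow the $xy$-separating curve argument of Kindermann et al.: the first maximal empty rectangle $R_1$ of the induced sequence must cover the forbidden region entirely, which lets me exhaustively branch on the sites or leaders forming its top and left sides. This severs the $\Gamma$ into two one-sided offshoots and a smaller two-sided core. For that core I invoke Lemma~\ref{lem:linear}, which guarantees that each subsequent maximal empty rectangle carries a site on its top or left side, justifying the compact encoding via $T_x$ and $T_y$. An entry is resolved by enumerating its candidate successor rectangles $R'$, each of which triggers a lookup into $T_x$ or $T_y$ plus two one-sided side problems whose feasibility is decided in $O(\log n)$ by counting sites in an auxiliary rectangle and comparing against the ports separating $\ell_i$ from the relevant $\ell$ (the case analysis of Figure~\ref{fig:ones}).

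The main obstacle, and what ultimately governs the running time, is bounding the work expended on the $\Gamma$-shaped dynamic program, since naively each entry could scan $\Omega(n^2)$ candidate rectangles. Here I lean on Bose et al.'s global accounting: the set of maximal empty rectangles arising across all entries has total size $O(n^3)$, no rectangle is examined twice in the search, and the entire collection can be precomputed in $O(n^3\log n)$ time. Thus the $\Gamma$-phase totals $O(n^3\log n)$, matches the $L$-phase, and dominates the preprocessing. Summing gives the $O(n^3\log n)$ bound claimed in the theorem, and a concrete feasible labelling (when one exists) is reconstructed by standard backtracking through the four DP tables.
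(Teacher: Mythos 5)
Your proposal follows essentially the same route as the paper: the same grid-node partition into $L$- and $\Gamma$-shaped subproblems, the same DP tables $T(\{x,\ell_i\},\lleft)$, $T(\{x,t_k\},\ttop)$, $T_x$, $T_y$ with the same auxiliary structures ($M$, $M'$, $M''$, $A_v$), the same use of Lemma~\ref{lem:onesided} and Lemma~\ref{lem:linear}, and the same Bose et al.\ accounting of $O(n^3)$ candidate rectangles to reach $O(n^3\log n)$. It is correct and matches the paper's argument, so no further comparison is needed.
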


%%%%%%%%%%%%%%%%%%%%%%%%%%%%%%%%% NEW SECTION
\section{Four-Sided Boundary Labelling}
\label{sec:4sided}
In this section, we give an $O(n^5)$-time dynamic programming algorithm for the four-sided boundary labelling.

\subsection{Algorithm Overview}
Our dynamic programming solution will search for a set of maximal empty rectangles $R_1, \ldots, R_k$ corresponding to the $xy$-separating curve, as described in Section~\ref{sec:prelimins}. The intuition is to represent a problem using at most two given leaders. For example, in Figure~\ref{fig:idea}(a) the first empty rectangle is $R_1$, with the top and right sides of $R_1$ determined by the leaders of $\ell_2$ and $b_2$, respectively. The problem will have a feasible solution with these given leaders if and only if the subproblems $P_1$ and $P_2$ (shown in falling and rising patterns, respectively) have feasible solutions. Since the subproblems must be balanced, given the two leaders adjacent to $P_2$ (see Figure~\ref{fig:idea}(a)), we can determine the boundary of the problem. We will use this idea to encode the subproblems.

It may initially appear that to precisely describe a subproblem, one should need some additional information along with the two given leaders. For example, the dashed boundary in Figure~\ref{fig:idea}(b). However, such information can be derived from the given leaders. Here, the top-right corner of $R_1$ (hence, the dashed line) can be recovered using the $y$-coordinate of the bottommost point of the leader  of  $\ell_2$, and the $x$-coordinate of the leftmost point of the leader of  $b_2$. We will try all possible choices for the first empty rectangle $R_1$. In a general step, we will continue searching for the subsequent empty rectangle. For example, consider the subproblem in Figure~\ref{fig:idea}(b). For a   subsequent empty rectangle, we will decompose the problem into at most three new subproblems (Figure~\ref{fig:idea}(c)). Each of these new subproblems can be represented using at most two leaders.

\begin{figure}[t]
\centering
\includegraphics[width=.9\textwidth]{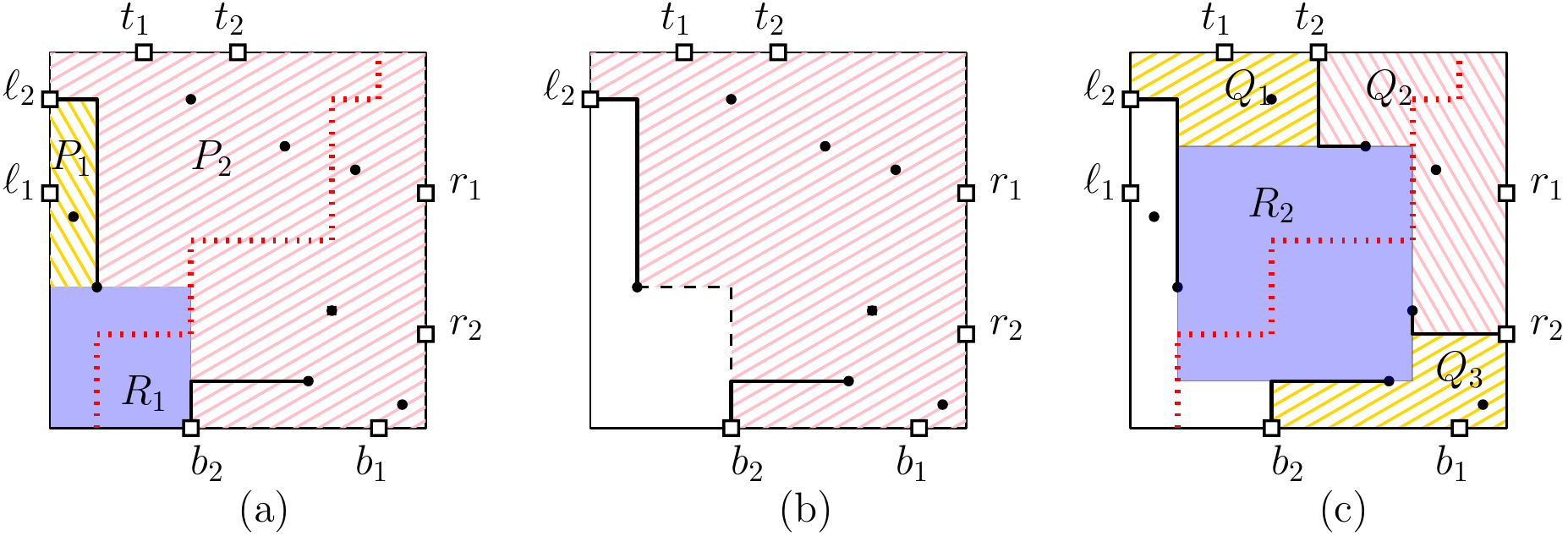}
\caption{The idea of decomposing a problem into subproblems.}
\label{fig:idea}
\end{figure}

\subsection{Solving Subproblems}
We will distinguish the subproblems depending on whether they contain the top-right corner $c_1$ of the rectangle $B$ or not. We first outline the case when a subproblem contains $c_1$.

\begin{figure}[t]
\includegraphics[width=\textwidth]{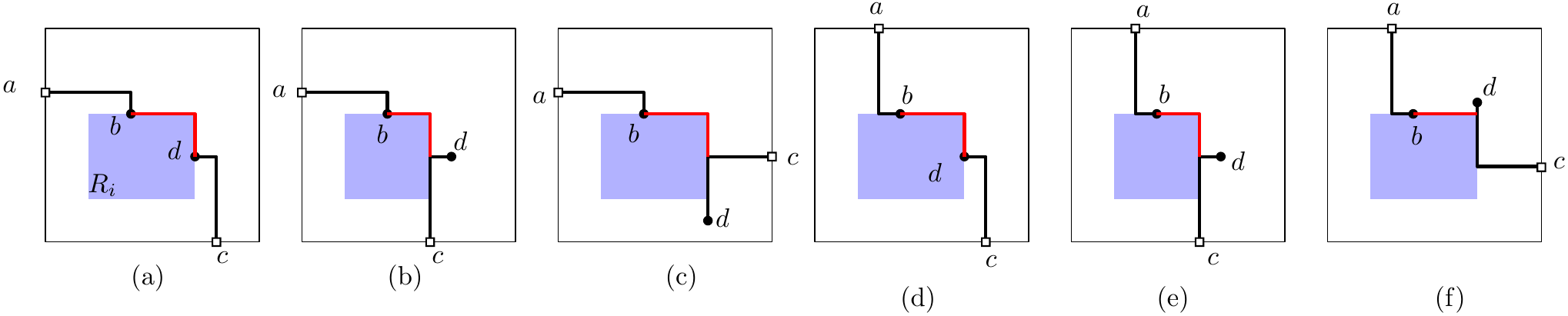}
\caption{Recovering the boundary (red) from the leaders when $R_i$ contains a site on its top side.}
\label{fig:simplecase}
\end{figure}

\paragraph{A subproblem contains $c_1$.} We denote a  subproblem that contains $c_1$ by $T(a,b,c,d)$. Here, $a,c$ are the two ports on the opposite sides of  the separating curve, and  $b$ and $d$ are their corresponding sites. Given $a,b,c,d$, we can completely determine the boundary of the subproblem, as follows.   Let $p$ and $q$ be a pair of  points on the leaders of $a$ and $c$, respectively,  with the minimum Euclidean distance. Then, the rectangle with diagonal $pq$ determines the boundary of the subproblem. Figure~\ref{fig:simplecase} illustrates the cases (up to the horizontal or vertical reflection of the sites), where the maximal empty rectangle $R_i$ contains a site on its top side. The recovered boundary is shown in red. On the other hand, Figure~\ref{fig:complexcase} illustrates the cases when $R_i$ does not contain any site on its top or right side.  If there are several choices for $p,q$, then we take the one with the smallest $y$-coordinate (or, symmetrically smallest $x$-coordinate) as illustrated in Figure~\ref{fig:complexcase}(d).

\begin{figure}[pt]
\centering
\includegraphics[width=.8\textwidth]{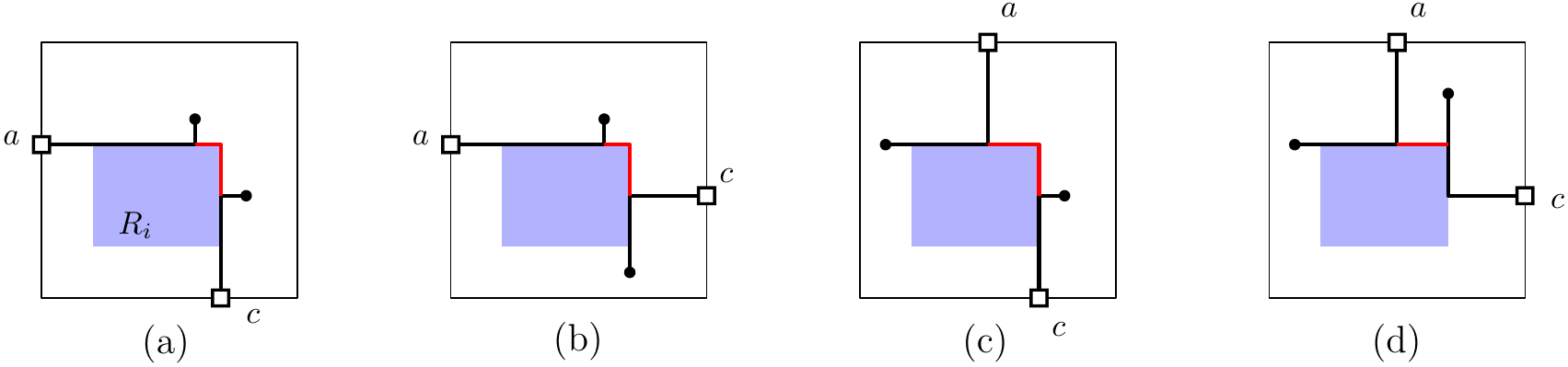}
\caption{Recovering the boundary (red) when $R_i$ does not contain  sites on its top or right side.}
\label{fig:complexcase}
\end{figure}

We now show how to decompose subproblems containing  $c_1$. Since we are searching for a partitioned solution, the decomposition must contain at least one such subproblem. Let $R_i$ be the empty rectangle associated with $T(a,b,c,d)$ (Figure~\ref{fig:decompose}(a)).  First assume the case when $c$ is on $\Bleft$ and $a$ is on $\Bbottom$. Let $p,q$ be the pair of points  with the minimum Euclidean distance between the  leaders of $c$ and $a$. Let $R_{pq}$ be the rectangle with diagonal $pq$.  Note that we can choose the bottom left corner of $R_{pq}$ as the bottom left corner of the   subsequent empty  rectangle $R_{i+1}$. There are $O(n^2)$ choices for the top-right corner of $R_{i+1}$. Thus trying all these subproblems would take $O(n^2)$ time to fill the table entry corresponding to $T(a,b,c,d)$.

\begin{figure}[pt]
\centering
\includegraphics[width=.9\textwidth]{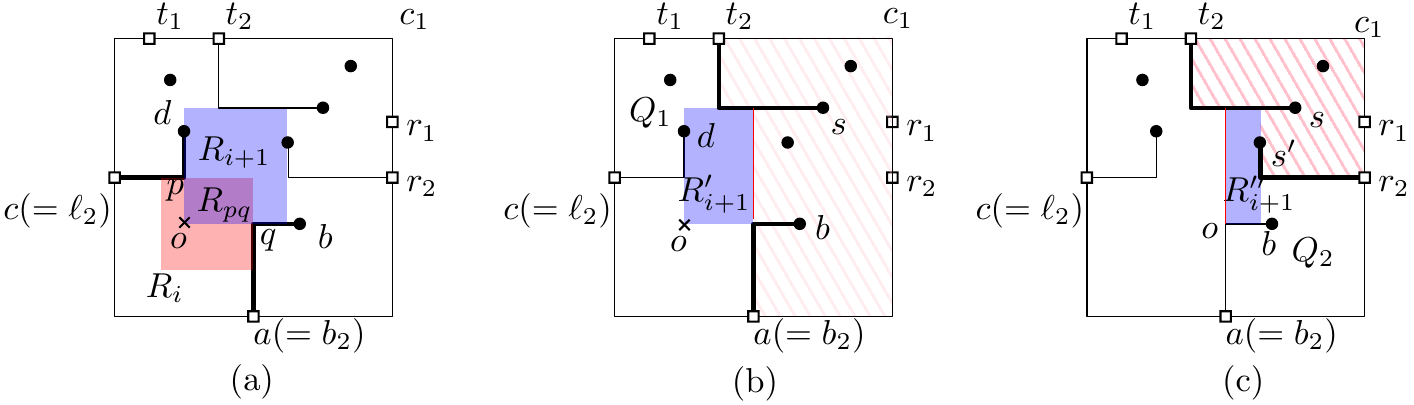}
\caption{Decomposition of a subproblem.}
\label{fig:decompose}
\end{figure}

Instead of searching for $R_{i+1}$ directly, we first extend $R_{pq}$ vertically (Figure~\ref{fig:decompose}(b)) and then horizontally (Figure~\ref{fig:decompose}(c)). Thus the subsequent rectangle $R'_{i+1}$ in our approach will have $O(n)$ choices for the top boundary. The leader $L$ that determines the top boundary will partition the problem into two subproblems: one is a problem that includes $c_1$, and the other problem $Q_1$ is either a one-sided or a two-sided subproblem depending on the position of $L$. For example, in Figure~\ref{fig:decompose}(b), $T(a,b,c,d)=T(a,b,t_2,s) \wedge Q_1$.

Note that sometimes $R_{pq}$ can be extended only in one direction. For example, in Figure~\ref{fig:decompose}(c), we have $T(a,b,t_2,s)$, and $R_{pq}$  is a vertical line bounded by the leaders, as shown in red. In this case we extend $R_{pq}$ horizontally and we have $O(n)$ choices for its right boundary. The leader $L'$  that determines the right boundary (i.e, the one connecting $r_2$ to $s'$) will partition the problem into two subproblems: one is a problem that includes $c_1$, and the other problem $Q_2$ is either a one-sided or a two-sided subproblem depending on the position of $L'$. For example, in Figure~\ref{fig:decompose}(c), $T(a,b,t_2,s)=T(r_2,s',t_2,s) \wedge Q_2$.

\paragraph{A subproblem excludes $c_1$.} We first show that these problems must be either one-sided or two-sided. Let $Q$ be the subproblem that includes $c_1$. Let $Q_1$ be a subproblem that  excludes $c_1$ and lies above the separating curve. Then the only available sides of $B$ for $Q_1$ are $\Bleft$ and $\Btop$. Symmetrically, only $\Bright$ and $\Bbottom$ are available for $Q_2$. If $Q_i$ (where $i\in\{1,2\}$) is a one-sided problem, then we can process the problem in the same ways as described in Section~\ref{sec:3sided}.

Consider now the case of two-sided problems, and assume w.l.o.g. that $Q_1$ is a two-sided problem. Note that $Q_1$ contains the top-left corner of $B$. Since we are searching for a partitioned solution (recall Lemma~\ref{lem:partitionedSolution}), there must be a $(-x,-y)$-monotone axis-aligned separating curve connecting the  bottom-right and the top-left  corners of $B$. Hence, there must exist a sequence of empty rectangles associated with  it.  Therefore, we can continue    decomposing the problem in the same way as we did for the problems that include $c_1$.

\subsection{Time Complexity}
We maintain the points and ports in an orthogonal range counting data structure (with $O(n\log n)$-time preprocessing) such that given an axis-aligned rectangle, one can report the number of ports and points interior to the rectangle in $O(\log n)$ time~\cite{Berg08}. 

The encoding of the form $T(a,b,c,d)$ allows us to use a dynamic programming table of size $O(n^4)$. Computing each entry requires examining $O(n)$ candidates for an empty rectangle. Each candidate rectangle divides the problem into at most two new subproblems. If the new subproblems are not balanced, then the candidate rectangle would not give a feasible solution. %\todo{A ``planar'' solution?}
 Otherwise, the new subproblems are balanced. For the two-sided problems   we perform table look-up, while the feasibility of the one-sided problems can be decided in $O(\log n)$ time. %to check whether each of them has an affirmative solution. 
 Once we complete the dynamic programming, we get a set of ports mapped to sites and a set of one-sided problems that are feasible, for which the solution needs to be constructed. Since there can be $O(n)$ such one-sided problems, and each can be solved in $O(n\log n)$ time (either by the algorithm of~\cite{DBLP:journals/jgaa/BenkertHKN09}, or by a reduction to the rectangular one-sided case), the total time required is $O(n^2 \log n)$.

We can use the range counting data structure to check whether a subproblem is balanced in $O(\log n)$ time~\cite{Berg08}. We can precompute the candidate rectangles for every grid point in $O(n^4 \log n)$ time. Since each table look-up takes $O(1)$ time, filling an entry of the dynamic programming table would take $O(n)$ time for all the $O(n)$ candidate rectangles corresponding to that entry. Hence, the overall running time of the algorithm is $O(n^5)$. The following theorem summarizes the main result of this section.
\begin{theorem}
\label{thm:mainResult}
Given a 1-bend four-sided boundary labelling problem with $n$ sites and $n$ ports, one can find a feasible labelling (if exists) in $O(n^5)$ time.
\end{theorem}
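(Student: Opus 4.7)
The plan is to build a dynamic programming algorithm whose correctness rests on Lemma~\ref{lem:partitionedSolution}: since any feasible solution admits a partitioned one, it suffices to search over partitioned solutions, which by Section~\ref{sec:prelimins} correspond to sequences of maximal empty rectangles determined by an $xy$-separating curve. I would encode each subproblem that contains corner $c_1$ by a four-tuple $T(a,b,c,d)$, where $a,c$ are the two ports currently bounding the subproblem on opposite sides of the separating curve and $b,d$ are the sites they are connected to. The first step is to verify that $T(a,b,c,d)$ uniquely determines the subproblem's geometric boundary: taking $p$ and $q$ as the closest pair of points on the two leaders of $a$ and $c$ (breaking ties by smallest $y$- or $x$-coordinate), the rectangle with diagonal $pq$ along with the two leaders and the relevant sides of $B$ recovers the region. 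This requires a case analysis of the cases when $R_i$ does or does not contain a site on its top/right side.

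Next, I would fill the DP table by, for each $T(a,b,c,d)$, searching for the next empty rectangle $R_{i+1}$. Rather than enumerating $O(n^2)$ top-right corners directly, I would decompose the transition into two phases: extend $R_{pq}$ vertically (giving $O(n)$ choices for a leader $L$ determining the top boundary of $R'_{i+1}$) and then extend horizontally ($O(n)$ choices for the right boundary leader $L'$). Each such choice splits $T(a,b,c,d)$ into a smaller four-tuple subproblem together with a one-sided or two-sided side subproblem. For a side subproblem that excludes $c_1$, the only available sides of $B$ are two adjacent ones, so it is one-sided (solvable in $O(\log n)$ via the precomputation of Section~\ref{subsec:3sidedTurnsLeft} using the techniques of~\cite{DBLP:journals/jgaa/BenkertHKN09}) or two-sided (solvable by recursing with the same $T$-encoding, since such a two-sided problem still admits a monotone separating curve by Lemma~\ref{lem:partitionedSolution}).

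For the complexity analysis, I would bound the DP table size by $O(n^4)$ from the four parameters, and observe that each entry examines $O(n)$ candidate empty rectangles, each yielding a constant number of subproblems resolved by $O(1)$ table look-ups or an $O(\log n)$ balance check using the orthogonal range counting data structure~\cite{Berg08}. Precomputing the candidate rectangles for every grid point takes $O(n^4\log n)$ time, and the final construction of feasible one-sided side-problem labellings costs an additional $O(n^2\log n)$, so everything is dominated by $O(n^4)\cdot O(n)=O(n^5)$.

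The main obstacle I anticipate is the correctness of the encoding and decomposition: I need to argue that no partitioned solution is missed by restricting to the four-tuple form, especially in the tricky configurations of Figure~\ref{fig:complexcase} where the empty rectangle $R_i$ has no site on its top or right side and the bounding pair $(p,q)$ must be chosen by the tie-breaking rule. I would discharge this by a case analysis over the geometry of the two leaders defining each maximal empty rectangle, showing that the tie-breaker reproduces exactly the rectangle structure forced by the separating curve, so the recursion explores every partitioned decomposition guaranteed by Lemma~\ref{lem:partitionedSolution}.
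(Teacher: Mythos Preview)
Your proposal is correct and follows essentially the same approach as the paper: the $T(a,b,c,d)$ encoding with boundary recovery via the closest pair $(p,q)$, the two-phase vertical-then-horizontal extension to keep the per-entry cost at $O(n)$, the classification of side subproblems excluding $c_1$ as one- or two-sided, and the $O(n^4)\cdot O(n)=O(n^5)$ accounting all mirror the paper's argument. Even the obstacle you single out---the case analysis when $R_i$ has no site on its top or right side and the tie-breaking rule for $(p,q)$---is exactly the subtlety the paper handles in the configurations of Figure~\ref{fig:complexcase}.
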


%%%%%%%%%%%%%%%%%%%%%%%%%%%%%%%%% NEW SECTION
\section{Conclusion}
\label{sec:conclusion}
In this paper, we gave algorithms with running times $O(n^3\log n)$ and $O(n^5)$ for the 1-bend three- and four-sided boundary labelling problems, improving the previous algorithms of Kindermann et al.~\cite{DBLP:journals/algorithmica/KindermannNRS0W16} and Bose et al.~\cite{BoseCK0M18}. The main direction for future work is to improve the running time of our  algorithms. Another direction is to study the fine-grained complexity of the problem; e.g., can we find a non-trivial lower bound on the running time? Even a quadratic lower bound is not known. Note that Bose et al.'s algorithm~\cite{BoseCK0M18} minimizes the sum of leader lengths, but ours do not. It would also be interesting to seek faster algorithms that minimize the sum of leader lengths.

\bibliographystyle{plain}
\bibliography{ref}

\end{document}